\newtheorem{theorem}{Theorem}
\newtheorem{definition}{Definition}
\definecolor{darkorange}{RGB}{255, 140, 0}
\definecolor{darkblue}{RGB}{84, 112, 198}
\definecolor{lightgreen}{RGB}{145, 204, 117}
\definecolor{lightyellow}{RGB}{250, 200, 88}
\definecolor{lightred}{RGB}{238, 102, 102}
\definecolor{lightblue}{RGB}{115, 192, 222}
\newtcolorbox{promptbox}[2][Prompt]{
colback=black!5!white,
arc=3pt, 
boxrule=0.5pt,
fonttitle=\bfseries,
title=#1, 
before upper={\small}, fontupper=\fontfamily{ptm}\selectfont,
colframe=#2, % 使用传递的参数来设定 colframe
}
\title[AAMAS-2026 Formatting Instructions]{Simulation-Free PSRO: Removing Game Simulation from Policy Space Response Oracles}
\author{Yingzhuo Liu$^{*}$}
\affiliation{
  \institution{BUPT}
  \city{Beijing}
  \country{China}
}
\email{liuyingzhuo86@bupt.edu.cn}
\author{Shuodi	Liu$^{*}$}
\affiliation{
	\institution{BUPT}
	\city{Beijing}
	\country{China}
}
\email{liushuodi@bupt.edu.cn}
\author{Weijun Luo}
\affiliation{
	\institution{BUPT}
	\city{Beijing}
	\country{China}
}
\email{wjl@bupt.edu.cn}
\author{Liuyu Xiang}
\affiliation{
	\institution{BUPT}
	\city{Beijing}
	\country{China}
}
\email{xiangly@bupt.edu.cn}
\author{Zhaofeng He$^{\dagger}$}
\affiliation{
	\institution{BUPT}
	\city{Beijing}
	\country{China}
}
\email{zhaofenghe@bupt.edu.cn}
\begin{abstract}
Policy Space Response Oracles (PSRO) combines game-theoretic equilibrium computation with learning and is effective in approximating Nash Equilibrium in zero-sum games. However, the computational cost of PSRO has become a significant limitation to its practical application. Our analysis shows that game simulation is the primary bottleneck in PSRO's runtime. To address this issue, we conclude the concept of Simulation-Free PSRO and summarize existing methods that instantiate this concept. Additionally, we propose a novel Dynamic Window-based Simulation-Free PSRO, which introduces the concept of a strategy window to replace the original strategy set maintained in PSRO. The number of strategies in the strategy window is limited, thereby simplifying opponent strategy selection and improving the robustness of the best response. Moreover, we use Nash Clustering to select the strategy to be eliminated, ensuring that the number of strategies within the strategy window is effectively limited. Our experiments across various environments demonstrate that the Dynamic Window mechanism significantly reduces exploitability compared to existing methods, while also exhibiting excellent compatibility. Our code is available at https://github.com/enochliu98/SF-PSRO.
\end{abstract}
\keywords{Policy Space Response Oracle, Deep Reinforcement Learning, Game Theory, Game Simulation}
\newcommand{\BibTeX}{\rm B\kern-.05em{\sc i\kern-.025em b}\kern-.08em\TeX}
\begin{document}

%%% The following commands remove the headers in your paper. For final 
%%% papers, these will be inserted during the pagination process.

\pagestyle{fancy}
\fancyhead{}

%%% The next command prints the information defined in the preamble.

\maketitle 

%{
%	\let\thefootnote\relax\footnotetext{$^*$ Equal Contribution.}
%	\let\thefootnote\relax\footnotetext{
%			$^\dagger$ Corresponding authors. 
%		}
%}

%%%%%%%%%%%%%%%%%%%%%%%%%%%%%%%%%%%%%%%%%%%%%%%%%%%%%%%%%%%%%%%%%%%%%%%%

\section{Introduction}

\begin{figure}[htbp]  % [t] 表示放在页面顶部，避免浮动问题
    \centering
    \includegraphics[width=\columnwidth]{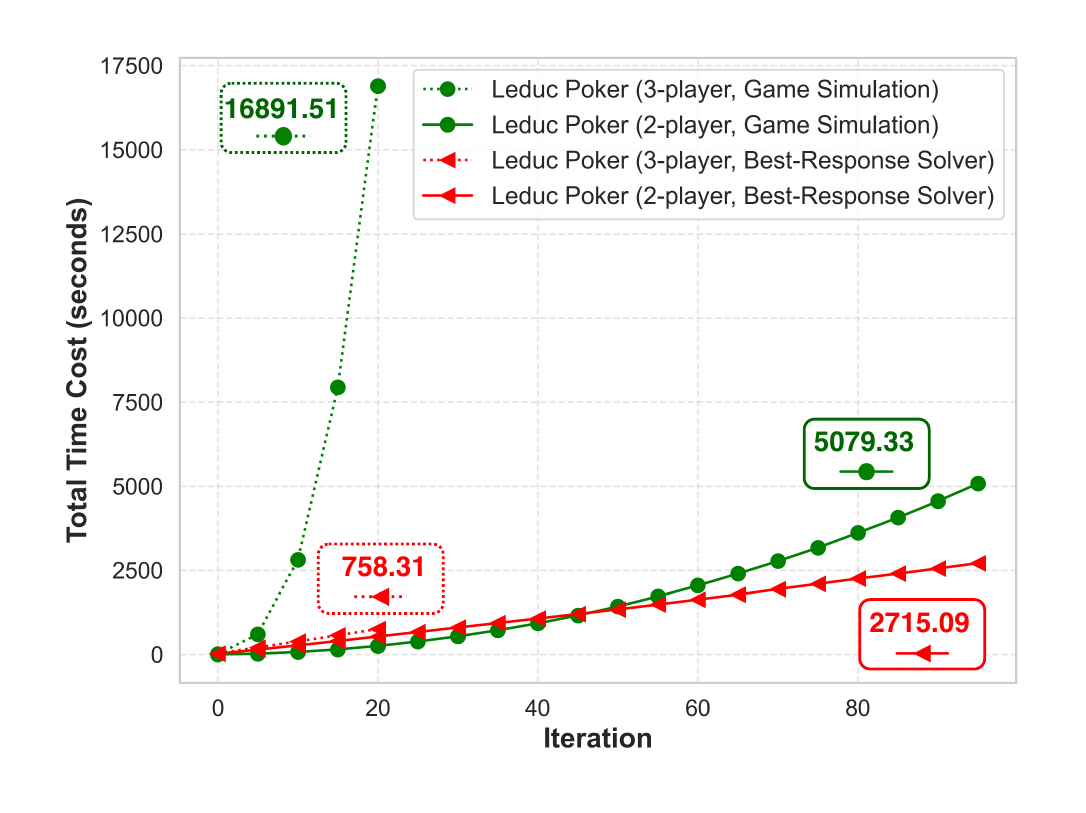}
    \caption{Game Simulation is the primary bottleneck of PSRO in terms of computational cost}
    \label{f1}
\end{figure}

% \begin{wrapfigure}[25]{r}{0.5\textwidth}
% \vspace{-1\baselineskip}
%   \begin{center}
%         \includegraphics[width=\linewidth]{f1.png}
%   \end{center}
%   %\captionsetup{font=10}
%   \caption{Game Simulation is the primary bottleneck of PSRO in terms of computational cost}
%     \label{fig:1}
% \end{wrapfigure}

% 前言一段
In recent years, Multi-Agent Systems (MAS) \cite{van2008multi} have become an emerging research focus. A MAS can typically be viewed as a game, comprising multiple decision-making agents that interact within a shared environment. As a result, the optimal behavior of an agent in such systems often depends on the behavior of other agents. To understand the strategic behavior among these agents, game theory \cite{simon1945theory} provides solution concepts like Nash Equilibrium (NE) \cite{fudenberg1991game}. However, even in relatively simple zero-sum settings, when the number of players and strategies grows significantly, even polynomial-time methods like Linear Programming become computationally intractable. As a result, learning methods, such as Multi-Agent Reinforcement Learning (MARL) \cite{zhang2021multi, rashid2020monotonic,gu2021multi}, have been proposed as alternatives to traditional equilibrium computation methods. However, these methods face two major challenges: non-stationary \cite{zhang2021multi} (i.e., each agent faces a potentially moving target) and non-transitivity \cite{tang2025distributed} (i.e., there is no clear "better" strategy for an agent). The Policy Space Response Oracles (PSRO) \cite{lanctot2017unified, mcaleer2020pipeline, smith2023strategic} framework emerged as a natural combination, integrating traditional game-theoretic equilibrium computation with learning \cite{bighashdel2024policy}.

PSRO maintains a strategy set for each player and iteratively expands it by learning new strategies. The learning process involves three components \cite{bighashdel2024policy}: Game Simulation (GS), Meta-Strategy Solver (MSS), and Best-Response Solver (BRS). Specifically, utilities for all strategy profiles in the joint strategy space $X$ (i.e., each player maintains a local strategy set $X_i$, and $X =  {\textstyle \prod_{i}X_i} $) are evaluated via GS to construct the meta-payoff matrix $P$. Based on this matrix, the MSS computes a meta-strategy $\sigma \in \bigtriangleup X$, which serves as the best-response target for the next iteration. Then, each player independently computes a best-response (BR) $\beta_i$ to its corresponding best-response target $\sigma_{-i}$ (i.e., the meta strategy of all players except $i$) based on response objective. In standard PSRO, the response objective can be written as $RO_i(\sigma) = u_i(\beta_i, \sigma_{-i})$ and maximizing it over $\beta_i$ gives player $i$ a best-response against other players' strategies $\sigma_{-i}$. The resulting best-response $\beta_i$ is subsequently added to the player’s local strategy set $X_i$.

As described above, learning each new strategy requires executing all three steps. Moreover, in many games, PSRO typically needs to learn a large number of new strategies to sufficiently expand the strategy sets (e.g., in Leduc Poker, existing methods often run over 100 iterations). Consequently, the PSRO procedure is generally time-consuming. The above analysis reveals that the runtime of PSRO is primarily influenced by two factors: \ding{182} the number of strategies to be learned, and \ding{183} the time required to learn each strategy. In this work, we focus on the latter. We further analyze the runtime of each of the three steps involved in learning a new strategy (It is worth noting that we consider GS and MSS together in our analysis. This is because the purpose of GS is to construct the meta-payoff matrix, which is then used for computing meta-strategy). 

\begin{promptbox}[Conclusion from Fig.1]{black}

\textbf{Game Simulation is the primary bottleneck of PSRO in terms of computational cost, significantly outweighing the cost of the Best-Response Solver. This issue becomes even more severe as the number of players or the number of iterations increases.}

\end{promptbox}

This trend is primarily due to the fact that, in each iteration, the number of games simulated by GS increases with both the number of players and iterations (Further details are provided in Section 2). In contrast, the BRS typically performs a fixed number of training steps per iteration, making its runtime relatively insensitive to changes in the number of players or iterations.

However, in many cases, we do not have sufficient time and computational resources to train PSRO. Therefore, finding a time-efficient method that still delivers acceptable results is an urgent problem to address. Considering the time consumption of GS, we conclude the concept of Simulation-Free PSRO (SF-PSRO), which refers to a PSRO method that does not require GS \footnote{It should be clarified that by “Simulation-Free” we specifically mean the removal of the Game Simulation (GS) component from PSRO, not the elimination of all forms of simulation. For instance, simulations remain indispensable during the BR computation process.}. Currently, existing PSRO variants that support simulation-free remain limited, and they can be summarized from two perspectives: MSS and BRS. The former mainly focuses on how to find an appropriate meta-strategy without relying on the meta-payoff matrix. Existing methods \cite{mcaleer2022anytime, mcaleer2024toward, zhou2022efficient} typically use minimum-regret constrained profiles (MRCP)\cite{jordan2010strategy} to construct the MSS. The latter focuses on how to construct diversity objectives for the BRS without relying on the meta-payoff matrix. Current approaches mainly explore this from the perspective of behavioral diversity (BD) \cite{liu2021towards, yao2023policy}.

Considering that BRS often uses reinforcement learning (RL) \cite{mnih2015human, schulman2017proximal, lillicrap2015continuous} for training, which requires interaction with the environment, this interaction process is quite similar to the GS process. We wonder whether the information obtained during the BRS process could, to some extent, replace the GS process. Therefore, we propose a new PSRO based on a dynamic window, which replaces the strategy set maintained in previous PSRO. Specifically, we maintain a fixed-size window, and when the number of strategies reaches the upper limit of the window, a new strategy is added by eliminating an old one. A fixed-size window can effectively limit the number of strategies in the strategy set, thereby simplifying opponent strategy selection (optimizing MSS) and enhancing the performance of the best response against the strategies within the strategy set (optimizing BRS).

Furthermore, to identify the strategy to be eliminated, we record the outcomes of interactions between the best-response and best-response target during the BRS process to construct a sketchy meta-payoff matrix. Since the BR is continuously optimized throughout training, the recorded outcomes are inherently imprecise. As a result, the corresponding meta-payoff matrix is considered "sketchy". Nevertheless, this matrix is sufficiently informative to assist in selecting an underperforming strategy for removal. Specifically, one straightforward approach is to eliminate the strategy with the lowest average payoff in the sketchy meta-payoff matrix. However, this may be misleading due to countering relationships among strategies. For example, a strategy with a low average payoff might strongly counter a dominant strategy while being exploited by several weaker ones, thus still holding strategic value. To address this, we construct Nash Clusterings \cite{czarnecki2020real} based on the sketchy meta-payoff matrix and select the weakest strategy from the lowest-performing Nash cluster for removal.

To summarize, our contributions are as follows:

\begin{itemize}
    \item We analyze the time consumption of each component in PSRO and point out that the cost of game simulation is increasingly severe as the number of players and iterations increases.
    \item We conclude the concept of Simulation-Free PSRO and summarize methods supporting simulation-free from the perspectives of meta-strategy solver and best-response solver.
    \item We propose a Dynamic Window mechanism, which can be effectively integrated with existing SF-PSRO methods.
\end{itemize}

%%%%%%%%%%%%%%%%%%%%%%%%%%%%%%%%%%%%%%%%%%%%%%%%%%%%%%%%%%%%%%%%%%%%%%%%

\section{Simulation-Free PSRO}

\label{s2}

As previously concluded, "\textbf{\textit{Game simulation is the primary bottleneck of PSRO in terms of computational cost, significantly outweighing the cost of the Best-Response. This issue becomes even more severe as the number of players or the number of iterations increases.}}" A concrete example can further illustrate this point.

Consider an $N$-player game in which each player maintains $M$ strategies. The total number of strategy profiles is then $M^N$. To construct the corresponding meta-payoff matrix, simulations must be conducted for each strategy profile. Assuming that K simulations are required per profile to obtain reliable estimates, the total number of simulations becomes $(M^N - (M-1)^N) \times K$ (In PSRO, the strategy set is iteratively expanded, and only the missing entries in the meta-game payoff matrix need to be filled). Since $N$ corresponds to the number of players and $M$ is determined by the number of PSRO iterations, both large M and N can lead to an exponential explosion in simulation cost, thereby validating our earlier conclusion. Moreover, to ensure accurate evaluation, the value of $K$ is typically set to a large number. For instance, in a 4-player game where each player maintains 10 strategies and $K$ is set to 1000, the total number of simulations required to build the meta-game is $(10^4 - 9^4) \times 10^3 \approx 3.439 \times 10^6$. In contrast, generating a single best-response usually requires only around $10^4$ simulations. This significant imbalance results in considerable inefficiency, which severely impacts the overall training time.

As illustrated in Fig. \hyperref[f2]{2}, the Vanilla PSRO \cite{lanctot2017unified} consists of three key components: game simulation, meta-strategy solver, and best-response solver. To accelerate the execution of PSRO, we eliminate the GS process, leading to the development of Simulation-Free PSRO. We observe that several existing PSRO variants (or some components in these variants) already support simulation-free. In the following, we summarize these methods from two perspectives: MSS and BRS.

\begin{figure*}[htbp]  % [t] 表示放在页面顶部，避免浮动问题
    \centering
    \includegraphics[width=0.95\textwidth]{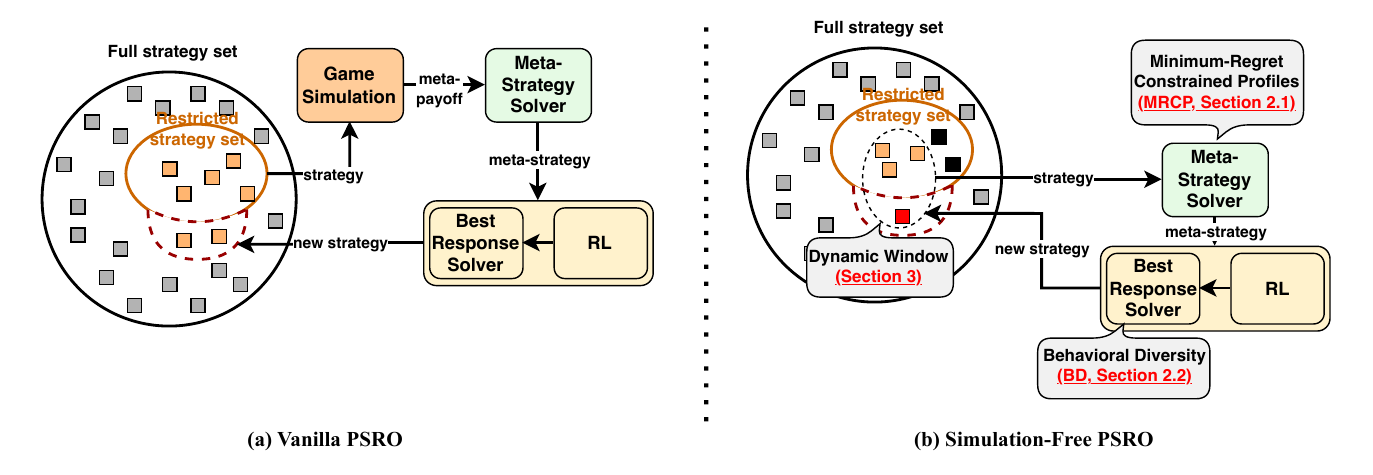}
    \caption{Comparison between Vanilla PSRO and Simulation-Free PSRO}
    \label{f2}
\end{figure*}

% 我们的方法兼容性很强（写一个对比的算法伪代码）
\noindent
\noindent
\begin{minipage}{\columnwidth}
  \begin{algorithm}[H]
    \caption{Vanilla PSRO}
    \begin{algorithmic}[1]
        \State \textbf{Input:} initial strategy sets $X=(X_1, X_2)$
        \State \textbf{Input:} initial meta-payoff matrix $P^X$
        \State \textbf{Input:} initial meta-strategies $ \sigma_i$
        \While{not terminated}
            \For{player $i\in\{1,2\}$}
                \For{many episodes}
                \State Train best response $\beta_i$ against \\  \hspace{\algorithmicindent}\hspace{\algorithmicindent}\hspace{\algorithmicindent} $\beta_{-i} \sim \sigma_{-i}$
                \EndFor
            \State $X_i = X_i \cup \{\beta_i\}$
            \EndFor
        \State \textcolor{blue}{\underline{Run game simulation to compute $P^X$}}
        \State Compute a meta-strategy $\sigma$ from $P^X$
        \EndWhile
    \State \textbf{Return:} $\sigma$
    \end{algorithmic}
  \end{algorithm}
\end{minipage}
\hfill
\begin{minipage}{\columnwidth}
  \begin{algorithm}[H]
    \caption{Simulation-Free PSRO}
    \begin{algorithmic}[1]
        \State \textbf{Input:} initial strategy windows $X^w=(X^w_1, X^w_2)$
        \State \textbf{Input:} initial sketchy meta-payoff matrix $P_{s}^{X}$
        \While{not terminated}
            \State Initialize the meta-strategy $\sigma_i$ to uniform over \\
            \hspace{\algorithmicindent}$X^w_i$ for $i \in \{1,2\}$
            \For{$i \in \{1, 2\}$}
                \For{$n$ iterations}
                    \For{$m$ iterations}
                        \State \textcolor{red}{\# BD\footnote{BD: computing the best-response with the behavioral diversity regularization term}: via Equation (2) in Section 2.2}
                        \State \textcolor{red}{\underline{Update $\beta_{i}$ against $\sigma_{-i}$}}
                    \EndFor
                    \State \textcolor{red}{\# MRCP\footnote{MRCP: computing the meta-strategy with the MRCP through regret minimization}: via Equation (1) in Section 2.1}
                    \State \textcolor{red}{\underline{Update $\sigma_{-i}$ against $\beta_{i}$}}
                \EndFor
                \State $X^w_i \gets X^w_i \cup \{\beta_i\}$ for $i \in \{1,2\}$
                \State \textcolor{orange}{\underline{Update $X^w$ and $P^X_s$ via dynamic window in section 3}}
            \EndFor
        \EndWhile
    \State \textbf{Return:} $\sigma$
    \end{algorithmic}
  \end{algorithm}
\end{minipage}

\subsection{Meta-Strategy Solver in Simulation-Free PSRO}

MSS computes the meta-strategy based on the meta-payoff matrix, which serves as the target for the next BR. In this setting, GS for the meta-payoff matrix are only required to compute BR target (i.e., meta-strategy). Therefore, if the meta-strategy can be obtained through alternative approaches, there is no need to maintain the complete meta-payoff matrix, effectively avoiding redundant simulations. 

To achieve this goal, existing methods often adopt minimum-regret constrained profiles (MRCP) \cite{jordan2010strategy} as the MSS. An MRCP is defined as the profile with the minimum regret with respect to the full game.

Given that computing an MRCP is highly challenging, Anytime PSRO \cite{mcaleer2022anytime} primarily focuses on two-player zero-sum games, and computes MRCPs through regret minimization against a best-response. Within one iteration, (1) two restricted games are constructed, where one player is unrestricted; (2) for both players, a best-response is trained against the restricted distribution (i.e., meta-strategy), while the restricted distribution is updated via a no-regret algorithm against this BR; (3) the resulting BR is then added to the population. In other words, the computation of the meta-strategy and the best response is integrated into a unified process, where the meta-strategy is directly updated based on the outcomes of interactions between each strategy in the strategy set corresponding to the meta-strategy and the best response. The update rule of the regret minimization \cite{cesa2006prediction} for the meta-strategy is formulated as follows, 

\begin{equation}
    \sigma_{i} = \frac{exp(\eta \hat{S}_{i})}{ {\textstyle \sum_{j=1}^{k} exp(\eta \hat{S}_{j}}) }  \quad \mathrm{for \ each} \ \ i \in [1,...,k]  
\end{equation}

where $\eta$ denotes the learning rate, and $\hat{S}_i$ represents the average outcomes over the last 1000 episodes in which the given strategy was played.

Furthermore, the outcomes of interactions are obtained during the data collection process of BRS. Specifically, in each episode, an opponent strategy is sampled from the strategy set corresponding to the meta-strategy according to the meta-strategy, and the best response is trained against this sampled strategy. At the end of the episode, the outcome of the match between the best-response and the sampled strategy is recorded. The regret minimization algorithm then updates the meta-strategy based on these recorded results, without requiring any additional GS. Therefore, this approach serves as a representative example of SF-PSRO.

% \begin{algorithm}
% \caption{Anytime PSRO}
% \begin{algorithmic}[1]
% \State \textbf{Input:} initial strategy sets $X=(X_1, X_2)$
% \While{not terminated}
%     \State Initialize the meta-strategy $\sigma_i$ to uniform over $X_i$ for $i \in \{1,2\}$
%     \For{$i \in \{1, 2\}$}
%         \For{$n$ iterations}
%             \For{$m$ iterations}
%                 \State Update policy $\beta_{-i}$ against $\beta_{i} \sim \sigma_{i}$
%             \EndFor
%             \State Update $\sigma_i$ via regret minimization vs. $\beta_{-i}$ (e.g., via Equation (1))
%         \EndFor
%         \State $X_i \gets X_i \cup \{\beta_i\}$ for $i \in \{1,2\}$
%     \EndFor
% \EndWhile
% \State \textbf{Return:} $\sigma$
% \end{algorithmic}
% \end{algorithm}

% 介绍其他相关方法，写成一段
Self-Play PSRO \cite{mcaleer2024toward} extends Anytime PSRO by including not one but two strategies in the empirical game at each iteration: one is the best response to MRCP, and the other is the best response to the opponent's most recently added strategy (i.e., the strategy introduced in the previous PSRO iteration). Efficient PSRO \cite{zhou2022efficient} also leverages MRCP to avoid game simulation. In addition, it introduces parallelized best-response training and proposes a warm-start technique to address the re-training issue in MRCP. The detailed pseudocode for Anytime PSRO, Self-Play PSRO and Efficient PSRO can be found in the Appendix A.

\subsection{Best-Response Solver in Simulation-Free PSRO}

Each player independently computes a best-response to its corresponding best-response target based on response objective. In existing methods for optimizing response objective, most methods introduce a diversity regularization term into the response objective via Equation (2), which can be broadly categorized into Behavioral Diversity (BD) \cite{yao2023policy,liu2021towards} and Response Diversity (RD) \cite{balduzzi2019open, liu2022unified, perez2021modelling, liu2021towards}.

\begin{equation}
    \beta_i = \mathop{\arg\max}_{\beta_i} \{u_i(\beta_i, \sigma_{-i}) + \lambda * \mathrm{diversity}(\beta_i) \}
\end{equation}

where player $i$ computes a best-response $\beta_i$ to its corresponding best-response target $\sigma_{-i}$. The $\text{diversity}(\beta_i)$ denotes a diversity regularization term computed based on $\beta_i$, and $\lambda$ is its associated weight.

A key limitation of RD lies in their objective: enhancing diversity in RD aims to enlarge the gamescape, which does not necessarily correspond to closer to a full game NE. Moreover, RD relies on computations over the meta-payoff matrix. However, in SF-PSRO, no such matrix is maintained due to the absence of GS. In contrast, BD effectively address the limitations of RD. Specifically, BD ensures a bijective and linear mapping between representations and policies, thereby guaranteeing that the strategies generated via BD expansion enlarge the policy hull (PH) \cite{yao2023policy}, thus reducing the population exploitability (PE) \cite{yao2023policy} and corresponding to closer to a full game NE. Furthermore, the computation of BD does not rely on the meta-payoff matrix, making it well-suited for scenarios where SF-PSRO are used and game simulations are avoided. It is worth noting that existing BD-based PSRO variants still rely on game simulation and thus, strictly speaking, do not fall under SF-PSRO. However, the computation of the BD regularization term itself is simulation-free, making it applicable to other SF-PSRO variants.

% 介绍其他相关方法，写成一段
Several BD-based methods have been proposed, existing diversity metrics explicitly or implicitly define a policy representation. In PSRO w. BD\&RD \cite{liu2021towards}, the joint occupancy measure is used to encode a policy, and BD is defined in the state-action space as the discrepancies of different strategies. However, f-divergence is typically used to measure the distance between two policies, but computing it based on occupancy measures is often intractable and usually approximated using neural network predictions. Instead, PSD-PSRO \cite{yao2023policy} uses the sequence-form representation and defines the policy distance using Bregman divergence, which can be simplified to a tractable form and optimized with state-action samples in practice.

%%%%%%%%%%%%%%%%%%%%%%%%%%%%%%%%%%%%%%%%%%%%%%%%%%%%%%%%%%%%%%%%%%%%%%%%

\section{Methods}
\label{s31}
\subsection{Challenges with Existing Methods}

Most existing methods require maintaining a strategy set, which will be expanded iteratively. This leads to the following challenges:

\paragraph{Challenge 1: Optimizing MSS}
When the number of strategies in the strategy set becomes larger and larger, selecting an appropriate opponent (i.e., computing a meta-strategy via MSS) becomes increasingly difficult. However, the choice of opponent has a significant impact on the final performance of self-play training. In particular, in scenarios where GS is not conducted, the meta-payoff matrix is unavailable, and thus common heuristics must be used for opponent selection, such as Vanilla Self-Play \cite{heinrich2016deep} (selecting the most recently added strategy), Fictitious Self-Play \cite{heinrich2015fictitious} (assigning equal weights to all strategies in strategy set), or MRCP (setting weights based on regret minimization).

Specifically, in MRCP-based PSRO such as Anytime PSRO, the meta-strategy is updated based on the regret minimization, which relies on the best response’s performance against each strategy in the strategy set. Therefore, as the strategy set expands, a fixed number of training episodes must be distributed among more strategies, reducing the number of episodes allocated per strategy. This results in less accurate estimations of outcomes, which in turn degrades the effectiveness of the MRCP.

\paragraph{Challenge 2: Optimizing BRS}
When computing the best-response, a fixed number of training episodes is typically used. In each episode, a opponent strategy is selected from the strategy set according to the meta-strategy, and the best response is trained against this strategy. In other words, the fixed training episodes must be allocated across all opponent strategies based on the meta-strategy. As the opponent set grows, fewer episodes are allocated to each opponent strategy, which may cause the learned best response to underperform against certain opponent strategies.

\subsection{Dynamic Window}

\begin{figure*}[htbp]  % [t] 表示放在页面顶部，避免浮动问题
    \centering
    \includegraphics[width=\textwidth]{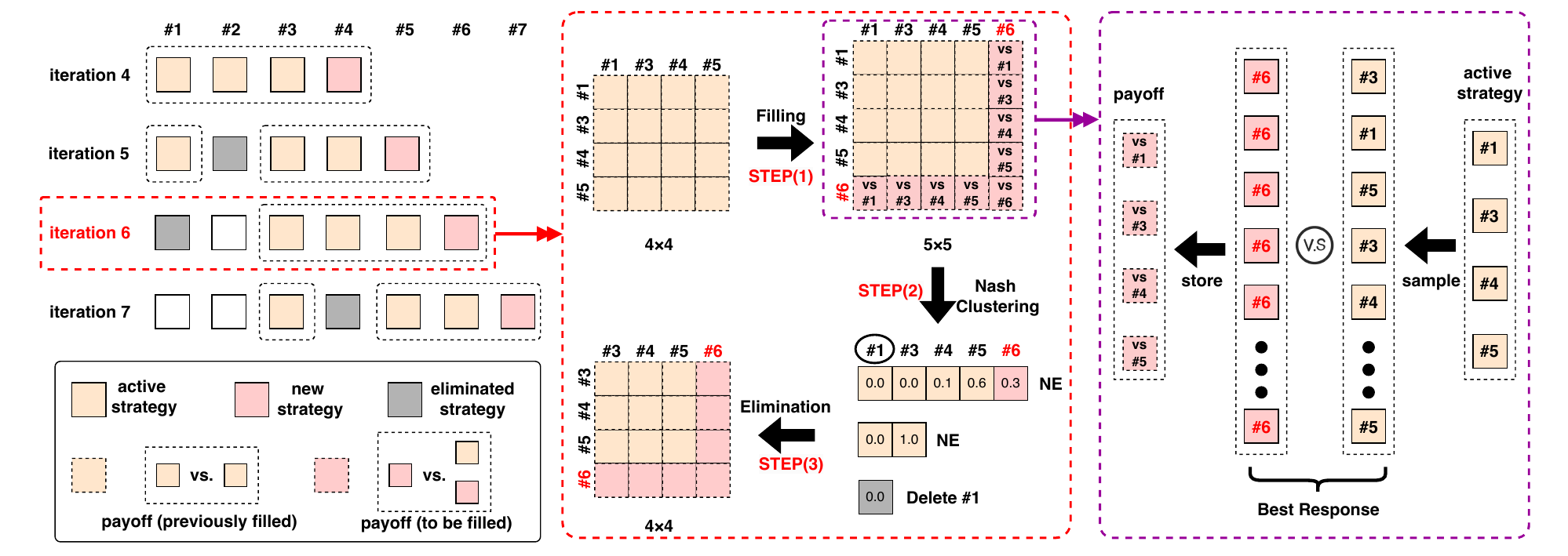}
    \caption{An illustration of the dynamic window mechanism. The left side shows each iteration, where a best-response (\textbf{new strategy}) is added to the strategy window, and one outdated strategy (\textbf{eliminated strategy}) is eliminated. The remaining ones are \textbf{active strategies}. The middle side depicts the three steps (\ding{182} Filling, \ding{183} Nash Clustering, and \ding{184} Elimination) used to identify the strategy to be eliminated. The right side details the implementation of the key step, Filling.}
    \label{f3}
\end{figure*}

Limiting the size of the strategy set offers a unified solution to both challenges. For \textit{Challenge 1}, we mitigate the difficulty in opponent selection by limiting the number of strategies in the strategy set (i.e., a smaller strategy set is filtered from the original strategy set). In this way, opponent strategies after filtering are generally more suitable as opponent strategies than those from the original strategy set. Moreover, limiting the size of the strategy set can also improve the accuracy of the estimation of outcomes in regret minimization. For \textit{Challenge 2}, a smaller strategy set ensures that each opponent strategy receives sufficient training during the BRS.

Furthermore, the key problem becomes how to effectively limit the size of the strategy set. Considering that the BRS is typically implemented via RL, which relies on interactions with the environment, a natural opportunity for strategy set filtering arises. Specifically, during training, an opponent strategy from the strategy set is selected at the beginning of each episode to compete against the current best response, and the outcome of the episode is recorded. After a sufficient number of episodes, statistically meaningful outcome information regarding the best response against various opponent strategies in the strategy set can be collected. However, existing methods fail to make full use of this information. In our view, one critical challenge lies in its staleness: as the best response is continuously optimized throughout training, the outcomes collected during earlier stages may no longer reflect the current relative performance. Although this issue prevents accurate ranking of all strategies in the strategy set, it is still possible to obtain a coarse-grained ordering, making it feasible to identify a relatively underperforming strategy.

Based on this insight, we propose a new SF-PSRO based on a dynamic window. A simple illustrative example is given in Fig. \hyperref[f3]{3}, we maintain a fixed-size window of size $N$ (for simplicity, we set $N=4$), and the strategies within this window are referred to as "active strategies". In each iteration, we train to obtain a new best-response and add it to the window. When the number of existing strategies in the window reaches $N$, for each new strategy added, we need to remove an eliminated strategy from the original window. 

Specifically, we maintain a sketchy meta-payoff matrix, which is constructed based on outcome information collected during the BRS process. The overall procedure for identifying an eliminated strategy is illustrated on the right side of Fig. \hyperref[f3]{3} and consists of three steps:

\textbf{(1) Filling}: The original sketchy meta-payoff matrix already contains the competitive outcomes among strategies \#1, \#3, \#4, and \#5. In the latest iteration, a new strategy (\#6) is trained via best-response solver, and its competitive outcomes (i.e., average return) against each of the active strategies in the current window (i.e., \#1, \#3, \#4, and \#5) are stored. By incorporating this information and exploiting the anti-symmetry of the payoff matrix, an updated matrix is formed.

\textbf{(2) Nash clustering}: Based on the updated matrix, to identify the single worst strategy, we apply Nash clustering \cite{czarnecki2020real} to construct multiple layers of Nash clusters, where each cluster contains a subset of strategies. These Nash clusters form a monotonic ordering with respect to Relative Population Performance (RPP). RPP is defined for two sets of agents $X_A$ and $X_B$, with a corresponding Nash equilibrium of the asymmetric game $(\sigma_A, \sigma_B) := \text{Nash}(P_{AB} \mid (A, B))$, as $\text{RPP}(X_A, X_B) = \sigma_A^\top P_{AB} \sigma_B$. Specifically, Nash clustering first computes the Nash equilibrium of the updated payoff matrix $P$ over the current set of strategies within the window (denoted as $\text{Nash}(P|X)$ when restricted to a strategy set $X$). The first cluster is then formed by collecting all strategies in the support of the equilibrium. This process is repeated on the remaining strategies until all strategies in the window have been assigned to a cluster.

    \begin{definition}
    Nash clustering $\mathcal{C}$ of the finite zero-sum symmetric game strategy set $X$ is defined by setting, for each $i \geq 1$: $N_{i+1} = \operatorname{supp}\left(\operatorname{Nash}\left(\mathbf{P} \big| X \setminus \bigcup_{j \leq i} N_j \right)\right)$ for $N_0 = \emptyset$ and $\mathcal{C} = (N_j : j \in \mathbb{N} \land N_j \neq \emptyset)$.
    \end{definition}
    Subsequently, we select the strategy with the smallest weight in the equilibrium corresponding to the last Nash cluster as the eliminated strategy.
    
\textbf{(3) Elimination}: The strategy identified for elimination is removed by deleting its corresponding row and column from the matrix.

% 介绍伪代码，并引出来我们方法兼容性很强
The pseudocode for Vanilla PSRO and SF-PSRO is presented in Algorithms 1 and 2, respectively. Compared to SF-PSRO, Vanilla PSRO requires game simulation (line 10 in Algorithm 1). In SF-PSRO, we integrate the two major categories of methods compatible with SF-PSRO summarized in Section 2—specifically, BD (line 10 in Algorithm 2) and MRCP (line 12 in Algorithm 2). Our newly proposed dynamic window mechanism, as shown in line 14 of Algorithm 2, identifies strategies to be eliminated from the strategy window based on three steps: Filling, Nash clustering, and Elimination, and updates the sketchy meta-payoff matrix accordingly. This highlights the plug-and-play nature of our method, which allows for seamless integration with existing methods.

\section{Experiments}

In this section, we aim to experimentally investigate the following problems: 

\begin{itemize}
    \item \textbf{\textit{Is the Dynamic Window-based SF-PSRO effective? }}Specifically, can it achieve competitive performance while consuming significantly less time? We compare its performance against existing self-play methods (Vanilla Self-Play \cite{heinrich2016deep}, Fictitious Self-Play \cite{heinrich2015fictitious}) as well as state-of-the-art PSRO variants (Vanilla PSRO \cite{lanctot2017unified}, PSD-PSRO \cite{yao2023policy}, Anytime PSRO \cite{mcaleer2022anytime}). Among them, PSD-PSRO and Anytime PSRO are representative methods for BD and MRCP, respectively. Evaluations are conducted across a range of extensive games, including the relatively simple Leduc Poker \cite{LanctotEtAl2019OpenSpiel} and more complex games such as Goofspiel \cite{LanctotEtAl2019OpenSpiel}. To further validate the effectiveness of our proposed method in multiplayer settings, experiments are carried out in both Goofspiel (2-player) and Goofspiel (3-player). For Leduc Poker and Goofspiel (2-player), we measure and report the exploitability of the meta-NE throughout the training process. In GoofSpiel (3-player), computing exploitability is prohibitively expensive; instead, after all methods have completed, we evaluate their performance based on TrueSkill\cite{herbrich2006trueskill} and record the running time of each method. Additionally, we compare the total running time and the performance (optimal exploitability or TrueSkill) of each method. 
    \item \textbf{\textit{Are all components in Dynamic Window-based SF-PSRO effective?}} We first evaluate the impact of the two key components in the Dynamic Window mechanism: eliminated strategy selection (Nash clustering) and window size restrictions. Specifically, we compare the full method (Ours) with two ablated variants: one without the eliminated strategy selection (Ours\_w/o\_select, the eliminated strategy is randomly selected), and another without both components (Ours\_w/o\_select\_window, the window size is not restricted). In addition, we investigate the influence of window size on the final performance.
    \item \textit{\textbf{Can Dynamic Window mechanism be effectively integrated with existing SF-PSRO variants?}} To evaluate its compatibility, we incorporate Dynamic Window into BD and MRCP, and compare their performance with and without our Dynamic Window enhancement. 
\end{itemize}

Standard deviations in the results are computed over 5 independent runs. The implementation details of each game and method are provided in Appendix B.

\subsection{Main Results}

\begin{figure*}[htbp]  % [t] 表示放在页面顶部，避免浮动问题
    \centering
    \includegraphics[width=0.75\textwidth]{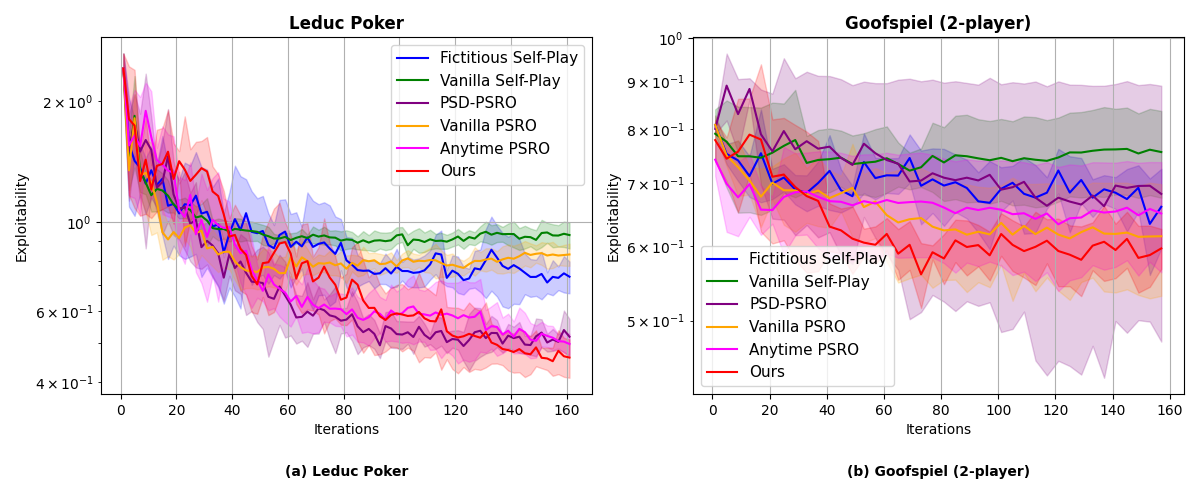}
    \caption{Exploitability of Leduc Poker and Goofspiel with 2e5 and 3e5 episodes for training BR}
    \label{f4}
\end{figure*}

\begin{figure*}[htbp]  % [t] 表示放在页面顶部，避免浮动问题
    \centering
    \includegraphics[width=0.92\textwidth]{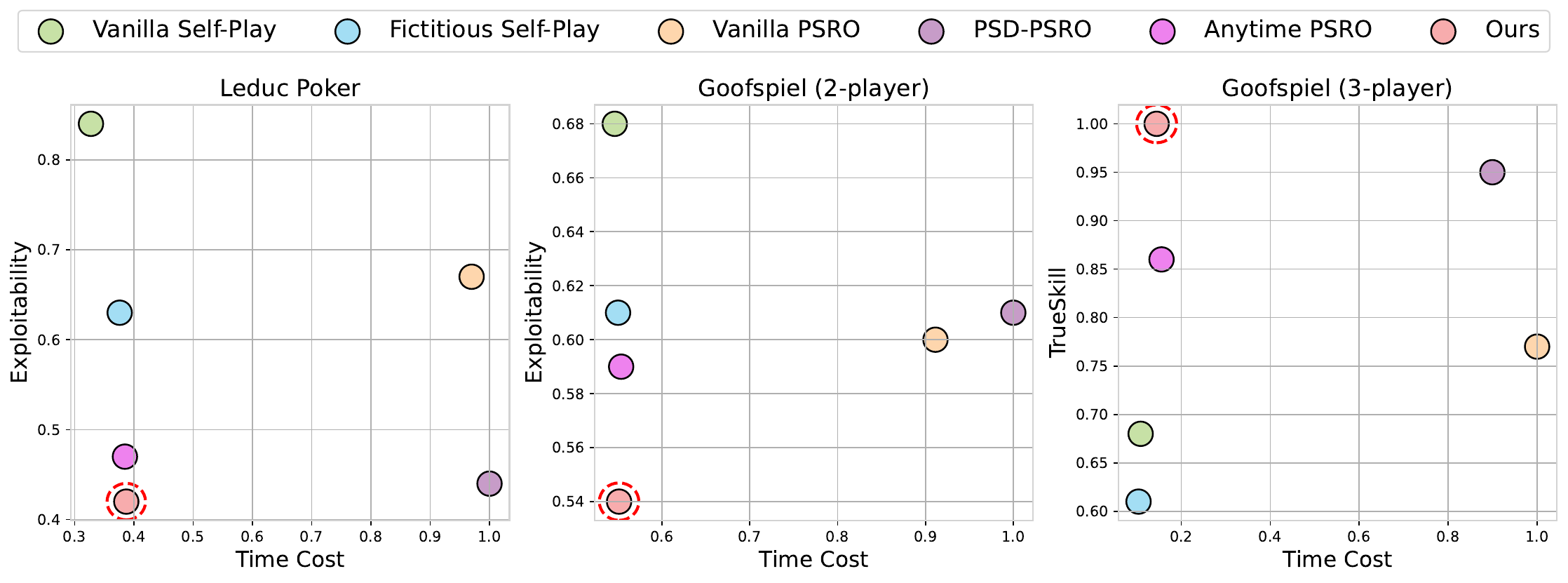}
    \caption{Exploitability v.s. time cost trade-offs across Leduc Poker and Goofspiel}
    \label{f5}
\end{figure*}

In this section, we primarily investigate problem (1). The exploitability \cite{timbers2020approximate} for each method in the Leduc Poker and Goofspiel are shown in Fig. \hyperref[f4]{4}. In Fig. \hyperref[f5]{5}, the horizontal and vertical axes represent the running time and performance of each method, respectively. To facilitate comparison, we normalize the running time on the horizontal axis.

Leduc Poker is a simplified variant of poker, featuring a deck with two suits and three cards per suit. Each player antes one chip and is dealt a single private card. In Leduc Poker, our method is implemented based on the Dynamic Window mechanism and combined with PSD, without incorporating MRCP\footnote{We find that incorporating MRCP into our method does not yield positive effects, possibly due to game-specific factors.}. To ensure a fair comparison, PSD was also included in the implementations of Fictitious Self-Play and Vanilla Self-Play methods. Goofspiel is a large-scale, multi-stage, simultaneous-move game, implemented in OpenSpiel. In Goofspiel, our method is implemented based on the Dynamic Window mechanism and combined with MRCP, without incorporating PSD\footnote{We find that incorporating PSD into our method does not yield positive effects, possibly due to game-specific factors.}. 

As shown in Fig. \hyperref[f4]{4}, in Leduc Poker, our method slightly outperforms PSD-PSRO and significantly outperforms the other methods. Similarly, in Goofspiel (2-player), our method outperforms the other methods. In Fig. \hyperref[f5]{5}, our method consistently lies on the Pareto frontier \cite{yang2022review} across all three games. It outperforms the best-performing baselines in each game—PSD-PSRO in Leduc Poker, Anytime PSRO in Goofspiel (2-player), and PSD-PSRO in Goofspiel (3-player)—in terms of both performance and running time. Although our method involves computing Nash clustering, which leads to slightly higher running time compared to Fictitious Self-Play and Vanilla Self-Play, its performance is substantially superior to both. This indicates that our method strikes a favorable trade-off between performance and efficiency across different games. Moreover, the results on Goofspiel (3-player) highlight that the efficiency gains are even more pronounced in multi-player scenarios, as the ratio $\frac{\text{Time Cost (PSRO)}}{\text{Time Cost (Ours)}}$ (6.23) is larger than that in both Goofspiel (2-player) (1.81) and Leduc Poker (2.58).

\subsection{Ablation Study}

\begin{figure*}[htbp]  % [t] 表示放在页面顶部，避免浮动问题
    \centering
    \includegraphics[width=0.75\textwidth]{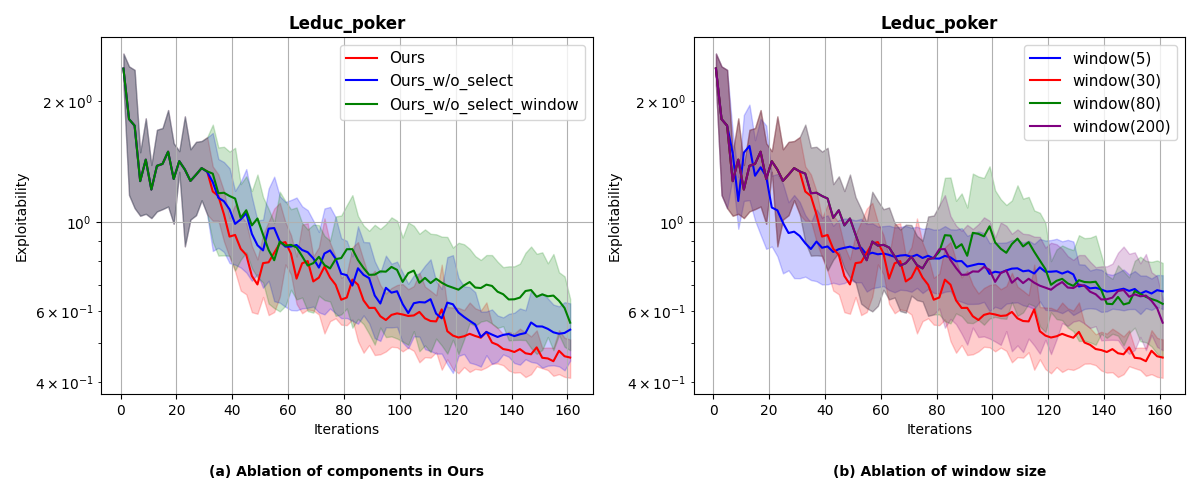}
    \caption{Ablations on key components and window size within Dynamic Window mechanism}
    \label{f6}
\end{figure*}

In this section, we investigate problem (2). In Leduc Poker, we compare the impact of each component in the Dynamic Mechanism on the final performance, as shown in Fig. \hyperref[f6]{6(a)} ("Ours\_w/o\_window"  is not included because without a window size restriction, there is no need for eliminated strategy selection). Additionally, the impact of the window size on final performance is compared in Fig. \hyperref[f6]{6(b)}.

As shown in Fig. \hyperref[f6]{6(a)}, omitting either the eliminated strategy selection component or the window size restriction component significantly degrades performance. For the former, randomly selecting eliminated strategies leads to the removal of some critical strategies, highlighting the importance of how eliminated strategies are selected. For the latter, not restricting the window size introduces the two challenges we discuss in Section 3.1, further validating that our method effectively addresses these challenges. As shown in Fig. \hyperref[f6]{6(b)}, the window size influences the final performance. If the window size is too small, some critical strategies may not be included in the window, while if it is too large, the aforementioned challenges arise again.

\subsection{Compatibility Analysis}

In this section, we focus on investigating problem (3). As summarized in Section 2, existing PSRO variants supporting simulation-free can be categorized from two perspectives: MSS and BRS. Among them, MRCP is the principle method for MSS in SF-PSRO, while BD is the principle method for BRS in SF-PSRO. In our experiments, we select Anytime PSRO as the representative method for MRCP, and adopt PSD—the diversity regularization term—as the representative for BD, applied to the BRS of Fictitious Self-Play.

Fig. \hyperref[f7]{7} reports the performance comparison of BD and MRCP before and after incorporating our Dynamic Window. As shown in Fig. \hyperref[f7]{7(a)} and Fig. \hyperref[f7]{7(b)}, after integration, both PSD and MRCP achieve consistent performance improvements. "MRCP+Ours" shows only marginal improvement over "MRCP" in Leduc Poker, possibly because MRCP already performs well in this environment, leaving limited room for additional gains. In contrast, "MRCP+Ours" significantly outperforms "MRCP" in Goofspiel, as detailed in Appendix C. These results demonstrate that the proposed Dynamic Window integrates well with existing PSRO variants.

\begin{figure}[htbp]  % [t] 表示放在页面顶部，避免浮动问题
    \centering
    \includegraphics[width=\columnwidth]{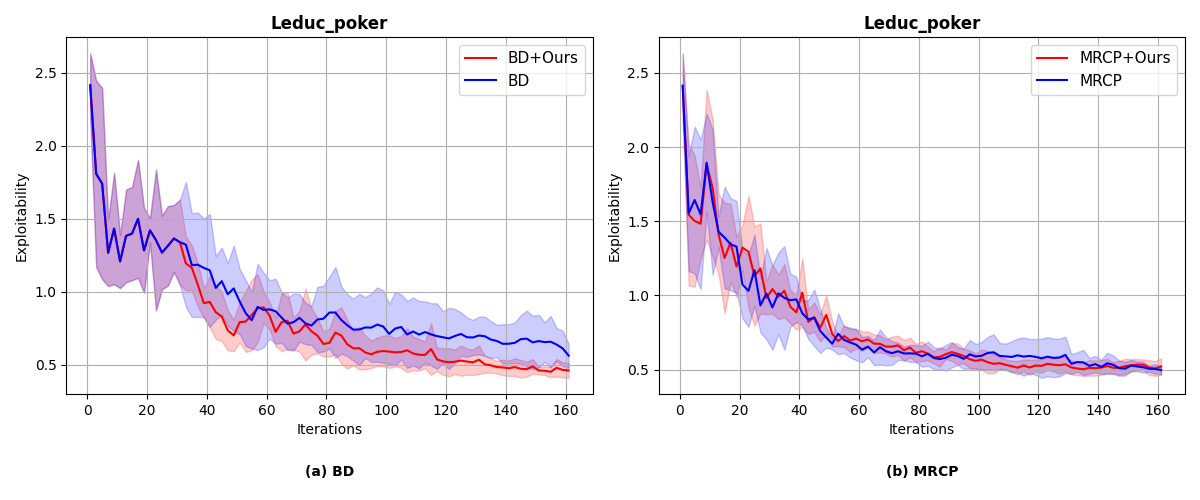}
    \caption{Compatibility of Dynamic Window mechanism with existing PSRO variants}
    \label{f7}
\end{figure}

\subsection{Additional Findings}

Since PSRO involves running a large number of iterations, and each iteration requires computing best-responses via RL, the reset of the optimizer for the policy/value networks in RL becomes an important consideration. We observe that most existing open-source implementations overlook this aspect, often reusing the optimizer across iterations. Fig. \hyperref[f8]{8} shows that the reset of the optimizer at each iteration generally leads to better performance for PSRO variants. This improvement may be attributed to the fact that carrying over historical information from previous iterations can negatively affect the training dynamics in subsequent iterations.

\begin{figure}[htbp]  % [t] 表示放在页面顶部，避免浮动问题
    \centering
    \includegraphics[width=\columnwidth]{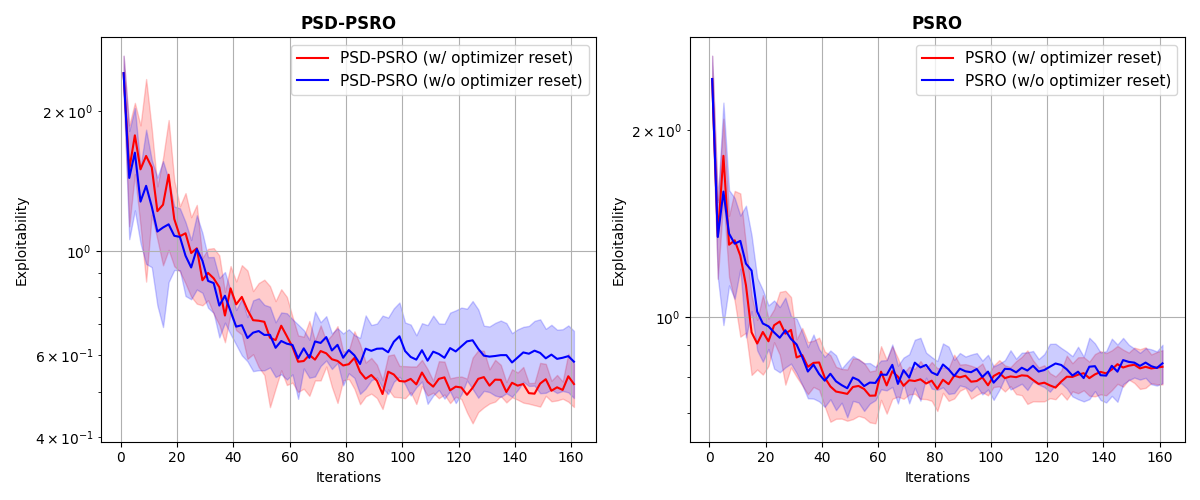}
    \caption{Impact of optimizer reset on best-response within PSRO variants}
    \label{f8}
\end{figure}

% 一个占比图，一个随iteration变化图

\section{Conclusions}

In this paper, we identify Game Simulation as the primary source of PSRO's high computational cost and introduce Simulation-Free PSRO (SF-PSRO) to eliminate this bottleneck. We review existing SF-PSRO variants and further propose a novel Dynamic Window-based SF-PSRO, which maintains a limited strategy window to simplify opponent selection and enhance best-response robustness. Experiments show that our method achieves competitive performance with substantially reduced time cost, validates the effectiveness of each component, and demonstrates strong plug-and-play compatibility with existing SF-PSRO variants.

Despite achieving competitive performance with significantly reduced computational time, SF-PSRO variants still suffer from several limitations.
First, in Dynamic Window-based SF-PSRO, a critical hyperparameter is the window size. Although our ablation studies have shown that this parameter should neither be too large nor too small—with a value of 30 proving empirically suitable for games like Leduc Poker and Goofspiel—identifying the optimal window size for more complex games may require additional, non-trivial tuning efforts.
Second, because SF-PSRO avoids explicit game simulations, it cannot construct a complete and accurate meta-payoff matrix. However, in standard PSRO, the final strategy is typically derived by computing a Nash equilibrium over this matrix during exploitability evaluation. Consequently, determining the final strategy in SF-PSRO becomes a non-trivial challenge. In Dynamic Window-based SF-PSRO, we mitigate this issue by maintaining a sketchy meta-payoff matrix. While less accurate than its fully simulated counterpart, this approximate matrix still provides a reasonable surrogate and can partially fulfill the role of the true meta-payoff matrix in guiding strategy selection.
We believe SF-PSRO represents a promising and valuable research direction. We look forward to future work that further refines and enhances SF-PSRO, making it more robust, adaptive, and broadly applicable across diverse game environments.

% In this paper, we analyze the time consumption issue of PSRO and identify Game Simulation as the main factor contributing to its high computational cost. To address this, we introduce the concept of Simulation-Free PSRO, which eliminates the need for Game Simulation, and summarize existing methods that support this approach. Additionally, we propose a novel Dynamic Window-based Simulation-Free PSRO method. This method introduces the concept of a strategy window to limit the number of strategies within the window, simplifying opponent strategy selection and enhancing the robustness of the best response. In our experiments, we first demonstrate that the Dynamic Window-based SF-PSRO achieves competitive performance while significantly reducing time consumption. Next, we verify the importance of each component in our proposed method. Finally, we show that our method exhibits excellent plug-and-play compatibility, working seamlessly with existing SF-PSRO variants.

%%%%%%%%%%%%%%%%%%%%%%%%%%%%%%%%%%%%%%%%%%%%%%%%%%%%%%%%%%%%%%%%%%%%%%%%

%%% The acknowledgments section is defined using the "acks" environment
%%% (rather than an unnumbered section). The use of this environment 
%%% ensures the proper identification of the section in the article 
%%% metadata as well as the consistent spelling of the heading.

\begin{acks}
If you wish to include any acknowledgments in your paper (e.g., to 
people or funding agencies), please do so using the `\texttt{acks}' 
environment. Note that the text of your acknowledgments will be omitted
if you compile your document with the `\texttt{anonymous}' option.
\end{acks}

%%%%%%%%%%%%%%%%%%%%%%%%%%%%%%%%%%%%%%%%%%%%%%%%%%%%%%%%%%%%%%%%%%%%%%%%

%%% The next two lines define, first, the bibliography style to be 
%%% applied, and, second, the bibliography file to be used.

\bibliographystyle{ACM-Reference-Format} 
\bibliography{sample}

%%%%%%%%%%%%%%%%%%%%%%%%%%%%%%%%%%%%%%%%%%%%%%%%%%%%%%%%%%%%%%%%%%%%%%%%

%%%%%%%%%%%%%%%%%%%%%%%%%%%%%%%%%%%%%%%%%%%%%%%%%%%%%%%%%%%%%%%%%%%%%%%%

\appendix

\newpage

\section{Algorithms}

\label{a1}

Anytime PSRO \cite{mcaleer2022anytime}, Self-Play PSRO \cite{mcaleer2024toward}, and Efficient PSRO \cite{zhou2022efficient} correspond to Algorithm 1, Algorithm 2, and Algorithm 3, respectively.

\begin{algorithm}
	\caption{Anytime PSRO}
	\begin{algorithmic}[htbp]
		\State \textbf{Input:} initial strategy sets $X=(X_1, X_2)$
		\While{not terminated}
		\State Initialize the meta-strategy $\sigma_i$ to uniform over $X_i$ for $i \in \{1,2\}$
		\For{$i \in \{1, 2\}$}
		\For{$n$ iterations}
		\For{$m$ iterations}
		\State Update policy $\beta_{-i}$ against $\beta_{i} \sim \sigma_{i}$
		\EndFor
		\State Update $\sigma_i$ via regret minimization v.s. $\beta_{-i}$ (e.g., via Equation (1))
		\EndFor
		\State $X_i \gets X_i \cup \{\beta_i\}$ for $i \in \{1,2\}$
		\EndFor
		\EndWhile
		\State \textbf{Return:} $\sigma$
	\end{algorithmic}
\end{algorithm}

\begin{algorithm}
	\caption{Self-Play PSRO}
	\begin{algorithmic}[H]
		\State \textbf{Input:} initial strategy sets $X=(X_1, X_2)$
		\While{not terminated}
		\State Initialize new strategy $\nu_i$ arbitrarily
		\State Initialize the meta-strategy $\sigma_i$ to uniform over $X_i$ for $i \in \{1,2\}$
		\For{$i \in \{1, 2\}$}
		\For{$n$ iterations}
		\For{$m$ iterations}
		\State Update policy $\beta_{-i}$ against $\beta_{i} \sim \sigma_{i}$
		\State Update new strategy $\nu_i$ against $\beta_{-i}$
		\EndFor
		\State Update $\sigma_i$ via regret minimization v.s. $\beta_{-i}$ (e.g., via Equation (1))
		\EndFor
		\State $X_i \gets X_i \cup \{\beta_i,\nu_i\}$ for $i \in \{1,2\}$
		\EndFor
		\EndWhile
		\State \textbf{Return:} $\sigma$
	\end{algorithmic}
\end{algorithm}

\begin{algorithm}
	\caption{Efficient PSRO}
	\begin{algorithmic}[!h]
		\State \textbf{Input:} initial strategy sets $X=(X_1, X_2)$
		\While{not terminated}
		\For{$i \in \{1, 2\}$ in parallel}
		\For{loop all active best response $\beta_i^j \in X_i$}
		\For{all $X_{-i}^{<j}$ in parallel} 
		\State $\beta_i^j,\sigma_{-i}^{<j} = \mathrm{SOLVEURR}(\beta_i^j, X_{-i}^{<j})$ (Similar to line 5-8 in Anytime PSRO)
		\If{the lowest $\beta_i^j$ meets stops condition}
		\State set it to fixed and $X_i = X_i \cup \{\beta_i^j\}$
		\State Generate a new active strategy
		\EndIf
		\EndFor
		\EndFor
		\EndFor
		\EndWhile
		\State \textbf{Return:} $\sigma$
	\end{algorithmic}
\end{algorithm}

\section{Benchmark and Implementation Details}

\label{a2}

\paragraph{Leduc Poker}

In Leduc Poker("leduc\_poker(player=2)" in OpenSpiel), we use a two-player setup. We apply the PSRO framework with a Meta-Nash solver, employing DQN as the oracle agent. The specific hyper-parameters used for this setup are listed in Table 1.

\begin{table}[htbp]
	\centering
	\caption{Hyper-parameters for Leduc Poker}
	\begin{tabular}{l|l}
		\hline
		\textbf{Hyperparameters} & \textbf{Value} \\
		\hline
		\textbf{Oracle} & \\
		\hline
		Oracle agent & DQN \\
		Replay buffer size & $10^4$ \\
		Mini-batch size & 512 \\
		Optimizer & Adam \\
		Learning rate & $5 \times 10^{-3}$ \\
		Discount factor ($\gamma$) & 1 \\
		Epsilon-greedy Exploration ($\epsilon$) & 0.05 \\
		Target network update frequency & 5 \\
		Policy network & MLP (64-64-64) \\
		Activation function in MLP & ReLU \\
		\hline
		\textbf{Vanilla PSRO} & \\
		\hline
		Episodes for each BR training & $2 \times 10^4$ \\
		Learning steps for BR training & 100 \\
		Meta-policy solver & Nash \\
		\hline
		\textbf{PSD-PSRO} & \\
		\hline
		Episodes for each BR training & $2 \times 10^4$ \\
		Learning steps for BR training & 100 \\
		Meta-strategy solver & Nash \\
		Diversity weight ($\lambda$) & 1 \\
		\hline
		\textbf{Dynamic Window-based SF-PSRO} & \\
		\hline
		Window size & 30 \\
		\hline
	\end{tabular}
\end{table}

\paragraph{Goofspiel}

In Goofspiel("goofspiel (player=2, num\_cards=5,             points\_order=descending, return\_type=win\_loss)" in OpenSpiel), we use a two-player, 5-card setup. We adopt a descending order, meaning the cards are bid in the sequence 5, 4, 3, 2, 1. Regarding the return, only the win/loss outcome is considered, with 1 for a win and 0 for a loss. We apply the PSRO framework with a Meta-Nash solver, using DQN as the oracle agent. Hyper-parameters are shown in Table 2.

\begin{table}[htbp]
	\centering
	\caption{Hyper-parameters for Goofspiel}
	\begin{tabular}{l|l}
		\hline
		\textbf{Hyperparameters} & \textbf{Value} \\
		\hline
		\textbf{Oracle} & \\
		\hline
		Oracle agent & DQN \\
		Replay buffer size & $10^4$ \\
		Mini-batch size & 512 \\
		Optimizer & Adam \\
		Learning rate & $5 \times 10^{-3}$ \\
		Discount factor ($\gamma$) & 1 \\
		Epsilon-greedy Exploration ($\epsilon$) & 0.05 \\
		Target network update frequency & 5 \\
		Policy network & MLP (128-128-128) \\
		Activation function in MLP & ReLU \\
		\hline
		\textbf{Vanilla PSRO} & \\
		\hline
		Episodes for each BR training & $3 \times 10^4$ \\
		Learning steps for BR training & 100 \\
		Meta-policy solver & Nash \\
		\hline
		\textbf{PSD-PSRO} & \\
		\hline
		Episodes for each BR training & $3 \times 10^4$ \\
		Learning steps for BR training & 100 \\
		Meta-strategy solver & Nash \\
		Diversity weight ($\lambda$) & 1 \\
		\hline
		\textbf{Dynamic Window-based SF-PSRO} & \\
		\hline
		Window size & 30 \\
		\hline
	\end{tabular}
\end{table}

\paragraph{Experiments Compute Resources}

In this paper, all experiments are conducted on Intel(R) Core(R) CPU i9-10900k @ 3.70GHz processors.

\section{Additional Experiments}

Despite achieving competitive performance while consuming significantly less time, SF-PSRO has certain limitations. The most notable limitation is the lack of game simulation, which prevents the construction of a meta-payoff matrix. Consequently, we are unable to compute the Nash equilibrium strategy for the current strategy set. However, exploitability in PSRO is typically evaluated based on the Nash equilibrium strategy, since it often yields the lowest exploitability. Without access to the Nash equilibrium, SF-PSRO variants usually resort to alternative strategies as the final output, which may lead to higher exploitability (It is worth noting that in our experimental evaluation, all methods compute exploitability based on their Nash equilibrium strategies to fairly assess the true performance of the resulting strategy sets). As a SF-PSRO method, our proposed Dynamic window-based SF-PSRO also encounters this issue. Nevertheless, we maintain a sketchy meta-payoff matrix—although it is less accurate than a fully simulated one, it still serves as a reasonable approximation and can partially fulfill the role of the true meta-payoff matrix.

\begin{figure}[htbp]  % [t] 表示放在页面顶部，避免浮动问题
	\centering
	\includegraphics[width=0.45\textwidth]{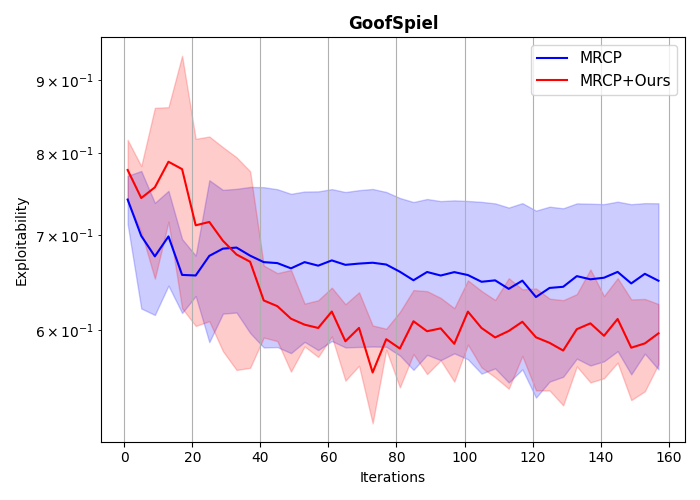}
	\caption{Compatibility of Dynamic Window mechanism with MRCP in GoofSpiel}
	\label{f3}
\end{figure}

\section{Convergence Proof}

\begin{theorem}[Convergence of PSRO under a Bounded Strategy Set]
	When the size of each player's strategy set is bounded by a constant $K$, the PSRO algorithm converges to a fixed point $(S_1^*, S_2^*, \dots, S_n^*)$ in a finite number of iterations.
\end{theorem}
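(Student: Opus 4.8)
The plan is to treat PSRO as a deterministic discrete-time dynamical system on the space of strategy-set tuples and to show that this system reaches a fixed point after finitely many steps. Write the PSRO update as a map $F$ sending a tuple $S = (S_1,\dots,S_n)$ to $F(S) = (S_1 \cup \{\beta_1\},\dots,S_n \cup \{\beta_n\})$, where the best responses $\beta_i$ are determined by $S$ through the meta-payoff matrix $P^S$, the meta-strategy $\sigma = \mathrm{MSS}(P^S)$, and the response objective $\beta_i \in \arg\max_{\beta_i} u_i(\beta_i,\sigma_{-i})$. A fixed point is then just a tuple $S^*$ with $F(S^*) = S^*$, i.e.\ a state at which the best response of every player is already contained in its own set, so that no further strategies are ever generated.

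First I would record the structural fact that the (vanilla) PSRO update is monotone: $S_i^{(t)} \subseteq S_i^{(t+1)}$ for every player $i$ and iteration $t$, since a best response is only added, never removed. Consequently the integer sequence $t \mapsto \sum_{i=1}^n |S_i^{(t)}|$ is non-decreasing. Next I would invoke the hypothesis $|S_i^{(t)}| \le K$, which bounds this sequence above by $nK$. A non-decreasing, integer-valued sequence that is bounded above is eventually constant, so there is a finite iteration $T$ beyond which $\sum_i |S_i^{(t)}|$ no longer changes.

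The third step turns ``the total size stops growing'' into ``the tuple stops changing.'' For $t \ge T$ each term $|S_i^{(t)}|$ is non-decreasing while their sum is constant, so every term is individually constant; combined with $S_i^{(t)} \subseteq S_i^{(t+1)}$ this forces $S_i^{(t)} = S_i^{(t+1)}$. By induction $S^{(t)} = S^{(T)}$ for all $t \ge T$, so $S^{*} := S^{(T)}$ satisfies $F(S^{*}) = S^{*}$ and is reached in finitely many iterations. To close the argument I would verify that $S^{*}$ is genuinely absorbing: since $\mathrm{MSS}$ and the best-response computation are functions of the current strategy sets alone, equality of the sets at $T$ and $T+1$ propagates forward, so PSRO cannot later resume growth — the finiteness of the convergence time is real and not an artifact of the bound.

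The main obstacle is the monotonicity step. For vanilla PSRO it is immediate, but under the Dynamic Window mechanism the elimination step breaks $S_i^{(t)} \subseteq S_i^{(t+1)}$, so the bounded-monotone-sequence argument no longer applies directly; one must instead work on the configuration space (finite once $K$ bounds the sets and the game has finitely many pure strategies) and rule out cycles — for instance by exhibiting a potential that strictly decreases whenever the window actually changes, or by arguing that the Nash-clustering elimination rule eventually only discards strategies that are never regenerated. A secondary subtlety is the continuous-policy case: if strategies are parametrized policies rather than elements of a finite pure-strategy set, ``the set stabilizes exactly'' should be relaxed to an $\varepsilon$-stationarity statement, or one should restrict, as the theorem implicitly does, to games with finitely many pure strategies so that every $S_i$ lives in a fixed finite ambient set.
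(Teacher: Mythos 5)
Your core argument proves a different statement from the one the paper intends. You read ``bounded by $K$'' as an assumption on the trajectory of vanilla PSRO, where sets only grow; under that reading the bounded-monotone-integer-sequence argument is fine, but it is nearly vacuous (assuming monotone sets stay below $K$ is essentially assuming they stabilize) and it is not the regime the theorem is about. In this paper the bound $K$ is \emph{enforced} by an elimination step (the Dynamic Window), so the inclusion $S_i^{(t)} \subseteq S_i^{(t+1)}$ on which your whole argument rests simply fails. You recognize this yourself in your last paragraph, but you only sketch possible remedies (``exhibit a potential that strictly decreases,'' ``rule out cycles'') without carrying any of them out — and that unresolved case is precisely the content of the theorem, so the proposal has a genuine gap rather than an alternative proof.

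For comparison, the paper closes this gap by (i) restricting the pruning rule: only strategies receiving zero probability in the current restricted Nash equilibrium $\mu^t$ may be removed, so removal changes neither $\mu^t$ nor any expected payoff; (ii) introducing the potential $\Phi(t)=\sum_i u_i(\mu_i^t,\mu_{-i}^t)$, argued to be non-decreasing when best responses are added and bounded above in a finite game; and (iii) noting that with $|S_i^t|\le K$ there are only finitely many joint configurations, so the process must terminate at a fixed point where each $\mathrm{BR}_i(\mu_{-i}^*)\in S_i^*$. Your sketched ``potential that strictly decreases'' is in the same spirit, but the decisive idea you are missing is the \emph{safe-pruning} condition that makes elimination payoff-neutral; without some such restriction, elimination can re-open the door to strategies being regenerated and the finite-configuration space alone does not preclude cycles. (The paper itself concedes, immediately after the proof, that its actual Nash-clustering elimination need not satisfy this zero-probability condition, so the theorem covers an idealized pruning rule — but your proposal does not reach even that idealized case.)
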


\begin{proof}
	The proof proceeds as follows.
	
	\begin{enumerate}
		\item \textbf{Finiteness of Strategy Sets.} Since $|S_i^t| \leq K$ for all iterations $t$ and all players $i$, and each $S_i^t$ is a finite subset of player $i$'s full strategy space $\Sigma_i$, the total number of distinct joint strategy set configurations $(S_1, S_2, \dots, S_n)$ is finite.
		
		\item \textbf{Monotonicity via a Potential Function.} Define the potential function 
		\[
		\Phi(t) = \sum_{i=1}^n u_i(\mu_i^t, \mu_{-i}^t),
		\]
		where $\mu^t = (\mu_1^t, \dots, \mu_n^t)$ is the Nash equilibrium of the empirical game defined by $(S_1^t, \dots, S_n^t)$.
		
		\item \textbf{Effect of Adding Best Responses.} When player $i$ adds a best response $\mathrm{BR}_i(\mu_{-i}^t)$ to $S_i^t$, the resulting utility satisfies
		\[
		u_i\big(\mathrm{BR}_i(\mu_{-i}^t), \mu_{-i}^t\big) \geq u_i(\mu_i^t, \mu_{-i}^t).
		\]
		Consequently, the potential function $\Phi(t)$ is non-decreasing over iterations.
		
		\item \textbf{Strategy Pruning Rule.} We adopt the following pruning strategy: \emph{remove only those strategies that have zero probability in the current Nash equilibrium $\mu^t$}. This operation does not alter the equilibrium $\mu^t$ nor the associated payoffs, because strategies with zero probability do not affect expected utilities.
		
		\item \textbf{Convergence.} Combining the above:
		\begin{itemize}
			\item The number of possible joint strategy set combinations is finite.
			\item The potential function $\Phi(t)$ is non-decreasing and bounded above (since utilities are bounded in finite games).
			\item Each iteration either strictly increases $\Phi(t)$ or leaves it unchanged while possibly reducing the size of some $S_i^t$ via safe pruning.
		\end{itemize}
		Therefore, the algorithm must terminate after a finite number of steps at a fixed point $(S_1^*, \dots, S_n^*)$, where for every player $i$,
		\[
		\mathrm{BR}_i(\mu_{-i}^*) \in S_i^*,
		\]
		and no further strategies are added or removed. At this point, $\mu^*$ constitutes an exact Nash equilibrium of the restricted game, and the exploitability is zero within the current strategy subspace.
	\end{enumerate}
\end{proof}

Theorem~1 describes a relatively idealized scenario. In contrast, our Dynamic Window-based SF-PSRO faces two practical gaps when selecting strategies for removal:

(1) It may be impossible to identify strategies that have zero probability in the current Nash equilibrium—i.e., all strategies in the window are assigned strictly positive probability.

(2) Due to the absence of full game simulation, the sketchy meta-payoff matrix used for pruning introduces approximation error relative to the true meta-payoff.

Regarding (1), this issue becomes significantly less pronounced as the window size increases. Indeed, our experiments show that a window size of 30 consistently yields substantially better performance than a window size of 5. 

As for (2), since we only use the sketchy meta-payoff to identify a \emph{relatively weak} strategy (rather than an exact best response), the requirement on its accuracy is modest. Moreover, our experimental results validate that the sketchy meta-payoff is sufficiently reliable for effective strategy selection.

\end{document}